\newtheorem{theorem}{Theorem}
\newtheorem{lemma}[theorem]{Lemma}
\newcommand{\fig}[3]{
	\begin{figure}[ht]
		\centering
		\includegraphics{#3}
		\caption{#2}
		\label{#1}
	\end{figure}
}
\newcommand{\p}[2]{\{#1, \ldots, #2\}}
\newcommand{\pij}{\p{i}{j}}
\newcommand{\Matching}{\textsc{Matching}}
\begin{document}

\title{Faster Bottleneck Non-crossing Matchings of Points in Convex Position}
\date{}

\author{
	Marko Savi\'{c}\footnote{University of Novi Sad, Faculty of Sciences, Department of Mathematics and Informatics. Partly supported by Ministry of Education and Science, Republic of Serbia. {\tt \{marko.savic, milos.stojakovic\}@dmi.uns.ac.rs}}
	\and
	Milo\v{s} Stojakovi\'{c}\footnotemark[1] \footnote{Partly supported by Provincial Secretariat for Science, Province of Vojvodina.}
}

\maketitle

\begin{abstract}
	Given an even number of points in a plane, we are interested in matching all the points by straight line segments so that the segments do not cross. Bottleneck matching is a matching that minimizes the length of the longest segment. For points in convex position, we present a quadratic-time algorithm for finding a bottleneck non-crossing matching, improving upon the best previously known algorithm of cubic time complexity.
\end{abstract}

\section {Introduction}

Let $P$ be a set of $n$ points in the plane, where $n$ is an even number. Let $M$ be a perfect matching of points in $P$, using $n/2$ straight line segments to match the points, that is, each point in $P$ is an endpoint of exactly one line segment. We forbid line segments to cross. Denote the length of a longest line segment in $M$ with $bn(M)$, which we also call the \emph{value} of $M$. We aim to find a matching that minimizes $bn(M)$. Any such matching is called \emph{bottleneck matching} of $P$.

\subsection{Related work}

There is plentiful research on various geometric problems involving pairings without crossings. Some of considered problems examine matchings of various planar objects, see \cite{aloupis2013non, aloupis2015matching, kratochvil2013non}, while more basic problems involve matching pairs of points by straight line segments, see \cite{aichholzer2010edge, aichholzer2009compatible, alon1993long}. There is always a non-crossing matching of points with non-crossing segments, and moreover it is straightforward to prove that a matching minimizing the total sum of lengths of its segments has to be non-crossing.

In \cite{chang1992solving}, Chang, Tang and Lee gave an $O(n^2)$-time algorithm for computing a bottleneck matching of a point set, but allowing crossings. This result was extended by Efrat and Katz in \cite{efrat2000computing} to higher-dimensional Euclidean spaces.

Abu-Affash, Carmi, Katz and Trablesi showed in \cite{abu2014bottleneck} that the problem of computing non-crossing bottleneck matching of a point set is NP-complete and does not allow a PTAS. They gave a $2\sqrt{10}$ factor approximation algorithm, and also showed that the case where all points are in convex position can be solved exactly in $O(n^3)$ time. In \cite{abu2015approximating}, Abu-Affash, Biniaz, Carmi, Maheshwari and Smid presented an algorithm for computing a non-crossing bottleneck plane matching of size at least $n/5$ in $O(n \log^2 n)$ time. They then extended it to provide an $O(n \log n)$-time approximation algorithm which computes a plane matching of size at least $2n/5$ whose edges have length at most $\sqrt{2}+\sqrt{3}$ times the length of a longest edge in a non-crossing bottleneck matching.

Bichromatic (sometimes also called bipartite) versions of the bottleneck matching problem, where only points of different colors are allowed to be matched, have also been studied. Efrat, Itai and Katz showed in \cite{efrat01geometryhelps} that a bottleneck matching between two point sets, with possible crossings, can be found in $O(n^{3/2}\log n)$ time. Bichromatic non-crossing bottleneck problem was proved to be NP-complete by Carlson, Armbruster, Bellam and Saladi in \cite{carlsson2015bottleneck}.

Biniaz, Maheshwari and Smid in \cite{biniaz2014bottleneck} study special cases of non-crossing bichromatic bottleneck matchings. They show that the case where all points are in convex position can be solved in $O(n^3)$ time with an algorithm similar to the one for monochromatic case presented in \cite{abu2014bottleneck}. They also consider the case where the points of one color lie on a line and all points of the other color are on the same side of that line, providing an $O(n^4)$ algorithm to solve it. The same results for these special cases are independently obtained in \cite{carlsson2015bottleneck}. In \cite{biniaz2014bottleneck} an even more restricted problem, a case where all points lie on a circle, is solved by constructing an $O(n \log n)$-time algorithm.

\subsection{Monochromatic bottleneck non-crossing matchings for convex point sets and our results}

In what follows we consider the case where all points of $P$ are in convex position, i.e.\ they are the vertices of a convex polygon $\mathcal{P}$, and they are monochromatic, i.e.\ any two points from $P$ can be matched.
As we are going to deal with matchings without crossings, from now on, the word matching is used to refer only to pairings that are crossing-free.

Let us label the points $v_0, v_1, \ldots, v_{n-1}$ in positive (counterclockwise) direction. To simplify the notation, we will often use only the indices when referring to the vertices. We write $\pij$ to represent the sequence $i, i+1, i+2, \ldots, j-1, j$, where all operations are calculated modulo $n$; note that $i$ is not necessarily less than $j$, and $\pij$ is not the same as $\p{j}{i}$. We say that $(i,j)$ is a \emph{feasible} pair if there exists a matching containing $(i,j)$, which in this case simply means that $\pij$ is of even size.

The problem of finding a bottleneck matching of points in convex position can be solved in polynomial time using dynamic programming algorithm, as presented in \cite{abu2014bottleneck}. Similar algorithm for bichromatic case is presented in \cite{biniaz2014bottleneck} and \cite{carlsson2015bottleneck}. The algorithm is fairly straightforward, and we are going to describe it briefly.

The subproblems we consider are the tasks of optimally matching only the points in $\pij$, where $i, j\in\p{0}{n-1}$ and $j-i$ is odd. Each matching $M$ on $\pij$ matches $i$ with some $k \in \p{i+1}{j}$, where $(i,k)$ is feasible. Segment $(i,k)$ divides $M$ in two parts, a matching on $\p{i+1}{k-1}$ and a matching on $\p{k+1}{j}$. If we solve those two parts optimally, we can combine them into an optimal matching of $\pij$ that contains $(i,k)$. We go through all the possibilities for $k$ and take the best matching obtained in this way, yielding an optimal matching of points in $\pij$. If we denote the value of this optimal matching by $b_{i,j}$, we get the following recursive formula,

\[
b_{i,j} = \min_{k = i+1, i+3, \ldots, j}
\begin{cases}
	|v_iv_j| & \text{if } j-i = 1 \\
	\max\{|v_iv_k|, b_{k+1,j}\} & \text{if } k-i = 1 \\
	\max\{|v_iv_k|, b_{i+1,k-1}\} & \text{if } k = j \\
	\max\{|v_iv_k|, b_{i+1,k-1}, b_{k+1,j}\} & \text{otherwise}.
\end{cases}
\]

This formula is then used to to fill in the dynamic programming table. There are $O(n^2)$ entries, and to calculate each we need $O(n)$ time. Therefore, the described algorithm finds a bottleneck matching for monochromatic points in convex position in $O(n^3)$ time.

In this paper, we present a faster algorithm for finding a bottleneck matching for monochromatic points in convex position, with only $O(n^2)$ time complexity. En route, we prove a series of results that give insights in the properties and structure of bottleneck matchings.

\section{Structure of bottleneck matching}

Our aim is to show the existence of a bottleneck matching with a certain structure that we can utilize to construct an efficient algorithm. We do so by proving a sequence of lemmas, with each lemma imposing an increasingly stronger condition on the structure.

Let us split all point pairs into the two categories. Pairs consisting of two neighboring vertices of $\mathcal{P}$ are called \emph{edges}, and all other pairs are called \emph{diagonals}. Each matching is, thus, comprised of edges and diagonals.

The \emph{turning angle} of $\pij$, denoted by $\tau(i,j)$, is the angle by which the vector $\overrightarrow{v_iv_{i+1}}$ should be rotated in positive direction to align with the vector $\overrightarrow{v_{j-1}v_j}$, see Figure~\ref{fig:TurningAngle}.

\fig{fig:TurningAngle}{Turning angle.}{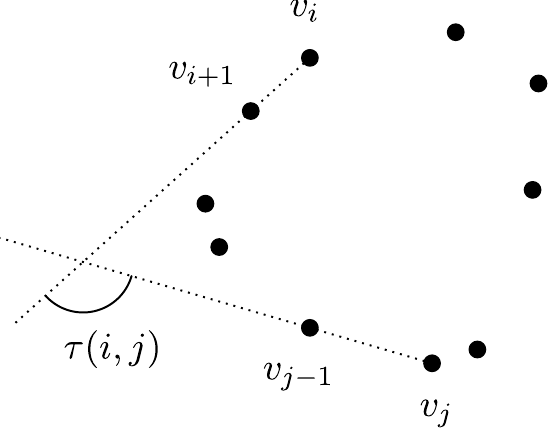}

\begin{lemma}
	\label{lem:PiHalf}
	There is a bottleneck matching $M$ of $P$ such that all diagonals $(i,j) \in M$ have $\tau(i,j) > \pi/2$.
\end{lemma}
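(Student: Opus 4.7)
My approach is a swap-and-monovariant argument driven by a single geometric dominance claim, which I state and prove first, then apply.

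The key claim is: if $\tau(i,j) \le \pi/2$, then $|v_iv_j| \ge |v_pv_q|$ for every pair of indices $p, q$ with $i \le p < q \le j$. To prove this I would note that each chain-edge direction $\overrightarrow{v_\ell v_{\ell+1}}$, for $\ell \in \{i, i+1, \ldots, j-1\}$, is obtained from $\overrightarrow{v_iv_{i+1}}$ by rotating positively through the partial sum of exterior angles at $v_{i+1}, \ldots, v_\ell$, a non-negative quantity bounded above by $\tau(i,j) \le \pi/2$. Hence all such edge vectors lie in a common $\pi/2$-cone. Since $\overrightarrow{v_iv_p}$, $\overrightarrow{v_pv_q}$, and $\overrightarrow{v_qv_j}$ are sums of these edge vectors they also lie in the cone, so pairwise inner products are non-negative, and expanding $|\overrightarrow{v_iv_j}|^2 = |\overrightarrow{v_iv_p}+\overrightarrow{v_pv_q}+\overrightarrow{v_qv_j}|^2$ immediately gives $|v_iv_j|^2 \ge |v_pv_q|^2$.

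Next, I would describe the local swap. Take a bottleneck matching $M$ containing a diagonal $(v_i, v_j)$ with $\tau(i,j) \le \pi/2$; call such a diagonal \emph{bad}. Non-crossing forces $v_{i+1}$ to be matched in $M$ to some $v_k$ with $k \in \{i+2, \ldots, j-1\}$. I replace $(v_i, v_j)$ and $(v_{i+1}, v_k)$ by $(v_i, v_{i+1})$ and $(v_k, v_j)$. The polygon edge $(v_i, v_{i+1})$ crosses nothing, and $(v_k, v_j)$ simply becomes the new boundary between the (already non-crossing) matching on $\{v_{k+1}, \ldots, v_{j-1}\}$ and the rest, so the new matching is still non-crossing. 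Both new segments have both endpoints in $\{v_i, v_{i+1}, \ldots, v_j\}$, so by the claim they have length at most $|v_iv_j|$, and the bottleneck value is preserved.

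To iterate the swap I would use the real-valued monovariant $\Phi(M) := \sum \tau(p,q)$, summed over all bad diagonals $(p,q) \in M$. The removed diagonal $(v_i, v_j)$ subtracts $\tau(i,j)$; since $k \ge i+2$, the new diagonal (if $(v_k, v_j)$ is one) satisfies $\tau(k,j) = \tau(i,j) - (\text{strictly positive sum of exterior angles at } v_{i+1}, \ldots, v_k) < \tau(i,j)$, so its contribution (if any) is strictly smaller; and if $(v_{i+1}, v_k)$ happened to be a bad diagonal its term is simply deleted. Thus $\Phi$ strictly decreases at every swap. Since $\Phi$ takes only finitely many values across all matchings of $P$, a bottleneck matching minimizing $\Phi$ must satisfy $\Phi = 0$, which is exactly the conclusion of the lemma.

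The main obstacle I anticipate is designing the right monovariant. The naive integer count of bad diagonals need not drop under this swap, because the newly created $(v_k, v_j)$ can itself be bad. The rescue is the strict inequality $\tau(k,j) < \tau(i,j)$, which only becomes useful once one sums the turning angles themselves rather than counting them. The geometric dominance claim is comparatively routine once one spots the $\pi/2$-cone structure, but care is needed to verify that every sub-vector $\overrightarrow{v_iv_p}$, $\overrightarrow{v_pv_q}$, $\overrightarrow{v_qv_j}$ is indeed inside the cone, a property that would fail if $\tau(i,j) > \pi/2$.
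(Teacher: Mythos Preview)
Your argument is correct, but it takes a noticeably more intricate route than the paper's. Both proofs rest on the same geometric dominance fact (when $\tau(i,j)\le\pi/2$ the segment $v_iv_j$ realizes the diameter of $\{v_i,\ldots,v_j\}$), and your cone argument proves this cleanly. The divergence is in how the fact is exploited. The paper performs a \emph{global} replacement: given a bad diagonal $(i,j)$, it discards every matched pair inside $\{i,\ldots,j\}$ and re-matches that whole block by consecutive edges $(i,i+1),(i+2,i+3),\ldots,(j-1,j)$. This creates no new diagonals inside the block, so the simple integer monovariant ``total number of diagonals'' drops strictly, and one is done in a single step per bad diagonal. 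Your \emph{local} two-pair swap $(v_i,v_j),(v_{i+1},v_k)\mapsto(v_i,v_{i+1}),(v_k,v_j)$ may produce a new (possibly bad) diagonal $(v_k,v_j)$, which is exactly why you were forced to upgrade from the naive count to the real-valued $\Phi=\sum\tau(p,q)$ and invoke the strict inequality $\tau(k,j)<\tau(i,j)$. That fix is valid (under strict convexity so that exterior angles are positive), but it is extra machinery that the paper's global replacement simply sidesteps. In short: same geometric core, but the paper's wholesale re-matching lets the cheapest possible monovariant work, whereas your local swap buys nothing here and costs you a subtler termination argument.
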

\begin{proof}
	Suppose there is no such matching. Let $M'$ be a bottleneck matching with the least number of diagonals. By assumption, there is a diagonal $(i,j) \in M'$ such that $\tau(i,j) \leq \pi/2$, see Figure~\ref{fig:PiHalf1}. If we replace all pairs from $M'$ lying in $\pij$ with edges $(i,i+1), (i+2,i+3), \ldots, (j-1,j)$, we obtain a new matching $M^*$, see Figure~\ref{fig:PiHalf2}. The diameter of $\pij$, i.e.\ the longest distance between any pair of points from $\pij$, is achieved by the pair $(i,j)$, so $bm(M^*) \leq bm(M')$. Since $M'$ is a bottleneck matching, $bm(M^*) = bm(M')$, meaning that $M^*$ is a bottleneck matching as well. Diagonal $(i,j)$ belongs to $M'$, but does not belong to $M^*$, so the new matching, $M^*$, has at least one diagonal less than $M'$, which contradicts the assumption.
\end{proof}

	\begin{figure}[ht]
		\centering
		\subfigure[]{
			\label{fig:PiHalf1}
			\includegraphics{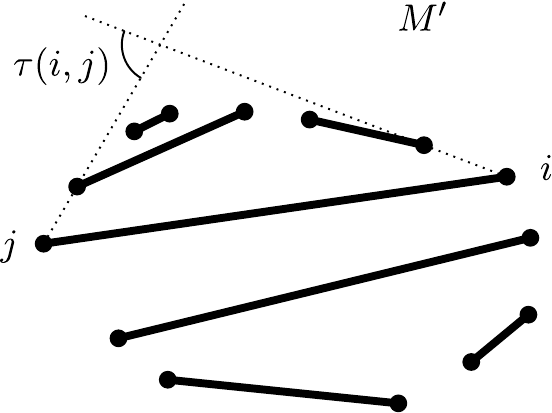}
		}
		\hspace{20pt}
		\subfigure[]{
			\label{fig:PiHalf2}
			\includegraphics{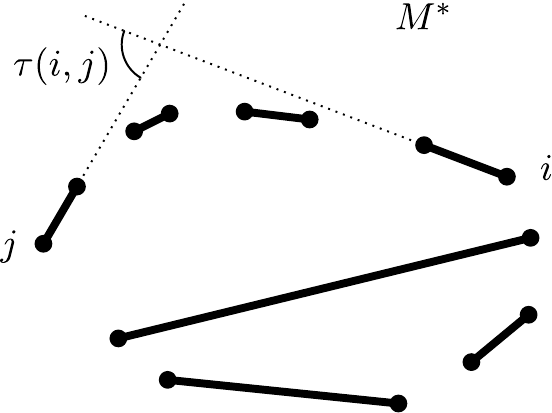}
		}
		\caption{Matchings before ($M'$) and after ($M^*$) the transformation.}
		\label{fig:PiHalf}				
	\end{figure}

Let us consider the division of the polygon $\mathcal{P}$ into regions obtained by cutting it along diagonals (but not edges) of the given matching $M$. Each region in this division is bounded by some diagonals of $M$ and by some edges from the polygon's boundary. If there are exactly $k$ diagonals bounding a region, we say the region is \emph{$k$-bounded}. Any maximal sequence of diagonals connected by $2$-bounded regions is called a \emph{cascade}, see Figure~\ref{fig:Cascades}.

\fig{fig:Cascades}{Diagonals inside each shaded area make a single cascade. There are three cascades with only one diagonal, one cascade with two diagonals, and one cascade with three diagonals.}{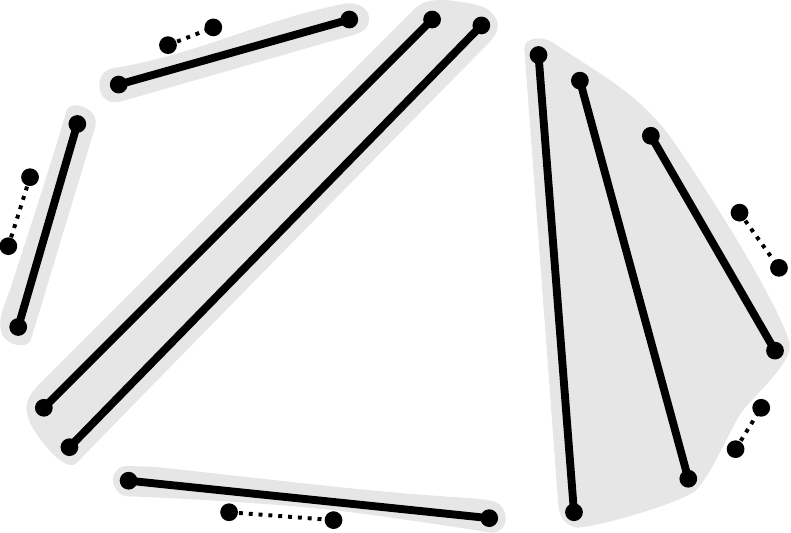}

\begin{lemma}
	\label{lem:ThreeCascades}
	There is a bottleneck matching having at most three cascades.
\end{lemma}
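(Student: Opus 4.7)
My plan is to prove Lemma~\ref{lem:ThreeCascades} by working with a bottleneck matching $M$ of the kind produced by Lemma~\ref{lem:PiHalf}. The proof of that lemma actually yields the stronger conclusion that for every diagonal $(i,j) \in M$ both $\tau(i,j) > \pi/2$ and $\tau(j,i) > \pi/2$: otherwise the edge-replacement transformation could be applied to the arc with smaller turning, reducing the number of diagonals. So I may assume each diagonal of $M$ has turning exceeding $\pi/2$ on both of its sides.

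The first step is a local bound: every region of the subdivision of $\mathcal{P}$ cut out by the diagonals of $M$ is $k$-bounded with $k \le 3$. Fix such a region $R$ and consider its $k$ bounding diagonals. Each of them cuts off, on the side opposite to $R$, an arc of polygon vertices whose external angles sum to its turning angle on that side. These $k$ arcs are pairwise disjoint subsets of the vertex set of $\mathcal{P}$, so the sum of their turning angles is at most $2\pi$. Combined with the strict lower bound $\pi/2$ per diagonal, this forces $k \cdot (\pi/2) < 2\pi$, hence $k \le 3$.

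The cascade tree, whose nodes are the regions and whose edges are the cascades, then has leaves of degree $1$ (the $1$-bounded regions) and internal nodes of degree exactly $3$ (the $3$-bounded \emph{branches}). A standard handshake count $L + 3B = 2(L + B - 1)$ yields $L = B + 2$, where $L$ and $B$ denote the numbers of leaves and branches, so the number of cascades is $L + B - 1 = 2B + 1$. It therefore suffices to prove $B \le 1$.

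The main obstacle is ruling out $B \ge 2$. Assuming two branches $R_1, R_2$ exist, let $d^\ast$ be the diagonal on $R_1$'s boundary whose non-$R_1$ side contains $R_2$ (the unique diagonal of $R_1$ on the cascade-tree path toward $R_2$), and let $d_c, d_d$ be the two diagonals on $R_2$'s boundary other than the one pointing back toward $R_1$. The key geometric observation is that the arcs cut off on the non-$R_2$ side by $d_c$ and $d_d$ are disjoint and each contained in the arc cut off on the non-$R_1$ side by $d^\ast$, since everything that lies beyond $R_2$ through $d_c$ or $d_d$ must also lie beyond $R_1$ through $d^\ast$. Summing external angles along this containment and applying Lemma~\ref{lem:PiHalf} to $d_c, d_d$ gives $\tau(d^\ast) \ge \tau(d_c) + \tau(d_d) > \pi$. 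Combining this with the $\pi/2$ lower bound on the turning of the other two diagonals of $R_1$, the three turning angles at $R_1$ total more than $\pi + \pi/2 + \pi/2 = 2\pi$, contradicting the bound from the first step. Therefore $B \le 1$, and $M$ has at most three cascades.
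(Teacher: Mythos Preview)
Your argument is correct and follows the same approach as the paper's proof: use the $\pi/2$ turning bound from Lemma~\ref{lem:PiHalf} to cap every region at three bounding diagonals, then rule out two $3$-bounded regions via a turning-angle contradiction (the paper phrases it as four diagonals in ``cyclical formation'' each contributing more than $\pi/2$, while you nest two of them inside $d^\ast$ to get $>\pi$ and then add the remaining two diagonals of $R_1$ --- the same $>2\pi$ total either way). Your explicit tree-counting ($L=B+2$, cascades $=2B+1$) and your two-sided reading of Lemma~\ref{lem:PiHalf} are both points the paper leaves implicit, so if anything your write-up is a bit cleaner.
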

\begin{proof}
	Let $M$ be a matching provided by Lemma~\ref{lem:PiHalf}, with turning angles of all diagonals greater than $\pi/2$. There cannot be a region bounded by $4$ or more diagonals of $M$, since if it existed, the total turning angle would be greater than $2\pi$. Hence, $M$ only has regions with at most $3$ bounding diagonals. Suppose there are two or more $3$-bounded regions. We look at arbitrary two of them. There are two diagonals bounding the first region and two diagonals bounding the second region such that these four diagonals are in cyclical formation, meaning that each diagonal among them has other three on the same side. Applying the same argument once again we see that this situation is impossible because it yields turning angle greater than $2\pi$. We conclude that there can be at most one $3$-bounded region. 
\end{proof}

The case of a bottleneck matching having exactly three cascades is possible, as shown in Figure~\ref{fig:ThreeCascadesExample}.

\fig{fig:ThreeCascadesExample}{Configuration of points for which the only bottleneck matching has exactly three cascades.}{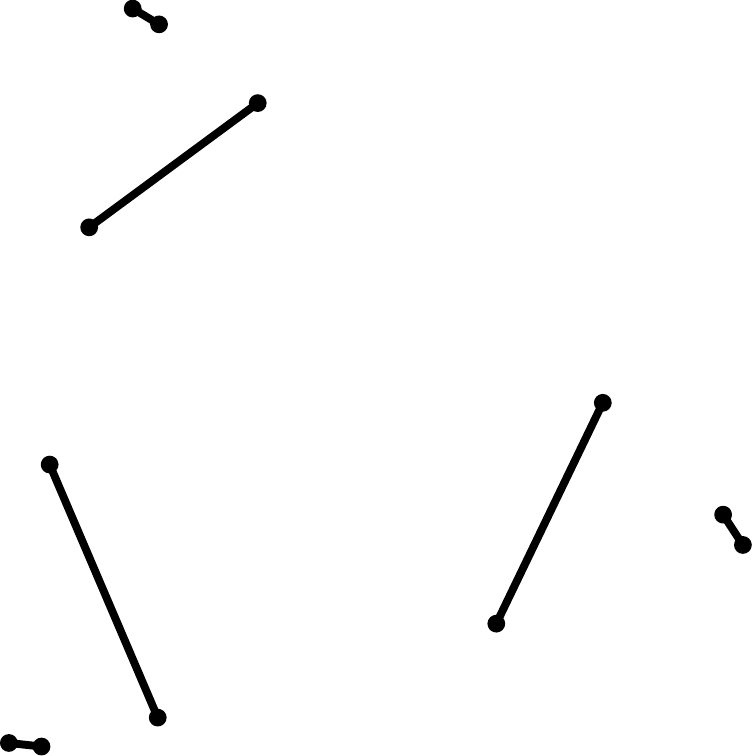}

Obviously, it is not possible for a matching to have exactly two cascades. So, from Lemma~\ref{lem:ThreeCascades} we know that there is a bottleneck matching which either has at most one cascade and no $3$-bounded regions, or it has a single $3$-bounded region and exactly three cascades. In the following section we define a set of simpler problems that will be used to find an optimal solution in both of these cases.

\section{Matchings with at most one cascade}
\label{sec:Subproblems}

When talking about matchings with minimal value under certain constraints, we will refer to these matchings as \emph{optimal}.

For $j-i$ odd, let $\Matching(i,j)$ be the problem of finding an optimal matching $M_{i,j}$ of points $\pij$, so that $M_{i,j}$ has at most one cascade, and the segment $(i,j)$ belongs to a region bounded by at most one diagonal from $M_{i,j}$ different from $(i,j)$.

If $j-i = 1$, then the solution to $\Matching(i,j)$ is exactly the edge $(i,j)$. If $j-i > 2$, we consider the following cases. If there is a solution to $\Matching(i,j)$ that contains the pair $(i,j)$, then $M_{i,j}$ can be constructed by taking $(i,j)$ together with $M_{i+1,j-1}$ . If not, then at least one of the edges $(i,i+1)$ and $(j-1,j)$ must be a part of $M_{i,j}$ (as otherwise points $i$ and $j$ would be endpoints of two different diagonals from $M_{i,j}$, neither of which is $(i,j)$), which is not allowed (by the requirement that the region containing $(i,j)$ has at most one other bounding diagonal). If $(i,i+1) \in M_{i,j}$, then $M_{i,j}$ can be constructed from $M_{i+2,j}$ and the edge $(i,i+1)$. Similarly, if $(j-1,j) \in M_{i,j}$, then we can get $M_{i,j}$ as $M_{i,j-2}$ plus the edge $(j-1,j)$.

Since these problems have optimal substructure, we can apply dynamic programming to solve them. If $bn(M_{i,j})$ is saved into $S[i,j]$, the following recurrent formula can be used to calculate the solution to $\Matching(i,j)$ for all feasible pairs $(i,j)$,
\begin{numcases}{S[i,j] = \min}
	\max\{S[i+1,j-1], |v_iv_j|\}\label{eqn:sa} \\
	\max\{S[i+2,j], |v_iv_{i+1}|\}\label{eqn:sb} \\
	\max\{S[i,j-2], |v_{j-1}v_j|\}.\label{eqn:sc}
\end{numcases}
Initially, we set $S[i,i] = 0$, for all $i$, and then we fill values in $S$ in order of increasing $j-i$, so that all subproblems are already solved when needed.

Beside the value of a solution to $\Matching(i,j)$, it is going to be useful to determine if pair $(i,j)$ is necessary for constructing $M_{i,j}$, i.e.\ we want to know if all the solutions to $\Matching(i,j)$ contain $(i,j)$. If that is true then we call such a pair \emph{necessary}. This can be easily incorporated into the calculation of $S[i,j]$. Namely, if case~(\ref{eqn:sa}) is the only one achieving minimum among cases (\ref{eqn:sa}), (\ref{eqn:sb}) and (\ref{eqn:sc}), we set $necessary(i,j)$ to $\top$, otherwise we set it to $\bot$.

We have $O(n^2)$ subproblems, each of which takes $O(1)$ time to be calculated. Hence, all calculations together require $O(n^2)$ time and the same amount of space. To find the optimum for matchings with at most one cascade, we just find the minimum of all $S[i+1,i]$, and take any $M_{i+1,i}$ that achieves it. This step takes only linear time.

Note that we calculated only the values of solutions to all subproblems. If an actual matching is needed, it can be easily reconstructed in linear time from the data in $S$.

\section{Finding bottleneck matching}
\label{sec:FindingBottleneckMatching}

As we concluded earlier, there is a bottleneck matching of $P$ having either at most one cascade, or exactly three cascades. An optimal matching with at most one cascade can be found easily from calculated solutions to subproblems, as shown in the previous section. Next, we focus on finding an optimal matching among all matchings with exactly three cascades, denoted by \emph{$3$-cascade matchings} in the following text.

Any three distinct points $i$, $j$ and $k$, where $(i,j)$, $(j+1, k)$ and $(k+1, i-1)$ are feasible pairs, can be used to construct a $3$-cascade matching by simply taking a union of $M_{i,j}$, $M_{j+1,k}$ and $M_{k+1,i-1}$. To find the best one we could run through all possible triplets $(i,j,k)$ and see which one minimizes $\max\{S[i,j], S[j+1,k], S[k+1,i-1]\}$. However, that requires $O(n^3)$ time, and thus is not suitable, since our goal is to design a faster algorithm. Our approach is to show that instead of looking at all $(i,j)$ pairs, it is enough to select $(i,j)$ from a set of linear size, which would reduce the search space to quadratic number of possibilities, so the search would take only $O(n^2)$ time.

In a $3$-cascade matching, let us call the three diagonals bounding the single $3$-bounded region the \emph{inner} diagonals.

\begin{lemma}
	\label{lem:BottleneckWithAllNecessary}
	If there is no bottleneck matching with at most one cascade, then there is a bottleneck $3$-cascade matching whose every inner diagonal is necessary.
\end{lemma}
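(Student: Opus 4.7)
The plan is a minimality argument. By Lemma~\ref{lem:ThreeCascades} together with the hypothesis that no bottleneck matching has at most one cascade (and the observation, made just after Lemma~\ref{lem:ThreeCascades}, that a matching cannot have exactly two cascades), I fix a bottleneck $3$-cascade matching $M$ that minimizes the total arc size $|\p{a_1}{b_1}|+|\p{a_2}{b_2}|+|\p{a_3}{b_3}|$, where $(a_1,b_1),(a_2,b_2),(a_3,b_3)$ denote the inner diagonals of $M$, labeled so that $a_1,b_1,a_2,b_2,a_3,b_3$ appear cyclically on $\mathcal{P}$. The goal will then be to show that every inner diagonal of this particular $M$ is necessary, which is exactly the conclusion of the lemma.

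Pursuing a contradiction, I suppose that, say, $(a_1,b_1)$ is not necessary, and let $M'$ be an optimal solution to $\Matching(a_1,b_1)$ with $bn(M')=S[a_1,b_1]$ and $(a_1,b_1)\notin M'$. A first step is to verify that $M\cap\p{a_1}{b_1}$ is itself a feasible candidate for $\Matching(a_1,b_1)$: its unique cascade is the cascade of $M$ rooted at $(a_1,b_1)$, and the region meeting the closing segment is bounded by at most one diagonal besides $(a_1,b_1)$, namely the next diagonal of that cascade if one exists. This yields $S[a_1,b_1]\leq bn(M)$. I then perform the swap
\[
M^{*} \;=\; M' \;\cup\; \bigl(M \setminus (M \cap \p{a_1}{b_1})\bigr).
\]
Since $(a_1,b_1)\in M$ and $M$ is non-crossing, no segment of $M$ outside $\p{a_1}{b_1}$ enters that subpolygon, so $M^{*}$ is a non-crossing perfect matching of $P$ with $bn(M^{*})\leq bn(M)$, hence again a bottleneck matching.

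The crux is to read off the cascade structure of $M^{*}$. By the defining constraint of $\Matching(a_1,b_1)$, the region of $M'$ adjacent to the closing segment $(a_1,b_1)$ is bounded by $0$ or $1$ diagonals of $M'$. If $0$, then $M'$ contains no diagonal at all; the former $3$-bounded region of $M$ loses $(a_1,b_1)$ without any replacement and becomes $2$-bounded in $M^{*}$, fusing the cascades through $(a_2,b_2)$ and $(a_3,b_3)$ into a single cascade, so $M^{*}$ ends up with exactly one cascade, directly contradicting the hypothesis. If $1$, let $(a'_1,b'_1)$ be the bounding diagonal; unpacking the recurrence in the regime $(a_1,b_1)\notin M'$ shows that $M'$ peels polygon edges from the ends of $\p{a_1}{b_1}$ via cases~(\ref{eqn:sb}) and~(\ref{eqn:sc}) until case~(\ref{eqn:sa}) is finally invoked at the strictly smaller feasible pair $(a'_1,b'_1)$, and all remaining diagonals of $M'$ lie inside $\p{a'_1}{b'_1}$, forming the single cascade of $M'$ rooted at $(a'_1,b'_1)$. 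The merged central region of $M^{*}$ is therefore $3$-bounded by $(a'_1,b'_1),(a_2,b_2),(a_3,b_3)$, so $M^{*}$ is another bottleneck $3$-cascade matching, but now with strictly smaller total arc size, contradicting the minimality of $M$.

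The step I expect to be the main obstacle is precisely this last piece of structural bookkeeping: confirming that swapping in $M'$ neither introduces a spurious fourth cascade nor changes the $k$-boundedness of the central region unexpectedly, and that the between-part parities remain even so that the peeled polygon edges pair up correctly with the old ones. The clean recursive structure of $M'$ described above (a layer of peeled polygon edges surrounding a single inner cascade rooted at $(a'_1,b'_1)$) is what will make this accounting go through cleanly.
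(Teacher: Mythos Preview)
Your proposal is correct and follows essentially the same approach as the paper: the paper's iterative replacement (``repeat until all inner diagonals are necessary; the process terminates because the $3$-bounded region is getting larger'') is exactly the dual of your minimality argument on total arc size. Your treatment is in fact more careful than the paper's, which asserts without detailed justification that the post-swap matching is again a $3$-cascade matching; your explicit case split ($0$ versus $1$ bounding diagonal in $M'$) supplies precisely the missing structural check.
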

\begin{proof}
	Take any $3$-cascade bottleneck matching $M$. If it has an inner diagonal $(i,j)$ that is not necessary, then (by definition) there is a solution to $\Matching(i,j)$ that does not contain the pair $(i,j)$ and has at most one cascade. We use that solution to replace all pairs from $M$ that are inside $\pij$, and thus obtain a new $3$-cascade matching that does not contain the pair $(i,j)$. Since $M$ was optimal and there was at most one cascade inside $\pij$, replaced pairs were also a solution to $\Matching(i,j)$, so the new matching must have the same value as the original matching. And since there is no bottleneck matching with at most one cascade, the new matching must be a bottleneck $3$-cascade matching as well. We repeat this process until all inner diagonals are necessary. The process has to terminate because the $3$-bounded region is getting larger with each replacement.
\end{proof}

We say that $(i,j)$ is a \emph{candidate} diagonal, if it is a necessary diagonal and $\tau(i,j) \leq 2\pi/3$.

\begin{lemma}
	\label{lem:BottleneckWithCandidate}
	If there is no bottleneck matching with at most one cascade, then there is a $3$-cascade bottleneck matching $M$, such that at least one inner diagonal of $M$ is a candidate diagonal.
\end{lemma}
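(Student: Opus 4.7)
The plan is to start with the 3-cascade bottleneck matching $M$ guaranteed by Lemma~\ref{lem:BottleneckWithAllNecessary}, whose three inner diagonals are all necessary, and argue by contradiction: if no inner diagonal is a candidate, then each inner diagonal must fail the turning-angle condition, i.e., each has turning angle strictly greater than $2\pi/3$ measured in the cascade-side orientation.

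The key observation will mirror the counting argument used in the proof of Lemma~\ref{lem:ThreeCascades}. For each inner diagonal $d_k$, I would orient it as $(p_k, q_k)$ so that the polygon arc from $p_k$ to $q_k$ going counterclockwise passes through the cascade attached to $d_k$; then $\tau(p_k, q_k)$ equals the sum of exterior polygon angles at the vertices strictly interior to this cascade-side arc. Since the 3-bounded region forces the other two inner diagonals (together with their cascades) to lie entirely on the opposite side of $d_k$, after cyclic relabeling the six endpoints will occur in counterclockwise order $p_1, q_1, p_2, q_2, p_3, q_3$, and the three cascade-side arcs will be pairwise disjoint.

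Disjointness will imply $\tau(d_1)+\tau(d_2)+\tau(d_3) \leq 2\pi$, because the sum of exterior angles over any disjoint collection of polygon vertices cannot exceed the polygon's total exterior angle. Under the contradictory hypothesis each summand exceeds $2\pi/3$, so the sum would strictly exceed $2\pi$, which is the desired contradiction. I can then conclude that at least one inner diagonal has $\tau \leq 2\pi/3$; combined with the necessity already secured by Lemma~\ref{lem:BottleneckWithAllNecessary}, this diagonal is a candidate, as required.

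The step that will require the most care is justifying that the three cascade-side arcs are pairwise disjoint and that the endpoints of $d_1, d_2, d_3$ interleave cleanly around the polygon. This will rely on the fact that all three inner diagonals bound the same 3-bounded region, so the cascade associated with any one of them sits entirely on one side of that diagonal, separated from the other two cascades by the 3-bounded region itself. Once this geometric picture is pinned down, the turning-angle arithmetic is essentially the same one-line computation as in Lemma~\ref{lem:ThreeCascades}, with the threshold $\pi/2$ replaced by $2\pi/3$.
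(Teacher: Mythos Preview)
Your proposal is correct and follows essentially the same approach as the paper: invoke Lemma~\ref{lem:BottleneckWithAllNecessary} to secure necessity of all three inner diagonals, then apply the pigeonhole argument that three turning angles each exceeding $2\pi/3$ would sum to more than $2\pi$. The paper's proof is a two-line sketch of exactly this reasoning, so your more careful justification of the disjointness of the three cascade-side arcs merely makes explicit what the paper leaves implicit.
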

\begin{proof}
	Lemma~\ref{lem:BottleneckWithAllNecessary} provides us with a $3$-cascade matching $M$ whose every inner diagonal is necessary. At least one of the $3$ inner diagonals of $M$ has turning angle at most $2\pi/3$, hence it is a candidate diagonal. Otherwise, the total turning angle would be greater than $2\pi$, which is impossible.
\end{proof}

Let us now look at a candidate diagonal $(i,j)$, and examine the position of points $\p{i+1}{j-1}$ relative to it. We construct the circular arc $h$ on the right side of the directed line $v_iv_j$, from which the line segment $v_iv_j$ subtends an angle of $\pi/3$, see Figure~\ref{fig:Polarity}. We denote the midpoint of $h$ with $A$. Points $v_i$, $A$ and $v_j$ form an equilateral triangle, hence we are able to construct the arc $a^-$ between $A$ and $v_i$ with the center in $v_j$, and the arc $a^+$ between $A$ and $v_j$ with the center in $v_i$. These arcs define three areas: $\Pi^-$, bounded by $h$ and $a^-$, $\Pi^+$, bounded by $h$ and $a^+$, and $\Pi^0$, bounded by $a^-$, $a^+$ and the line segment $v_iv_j$, all depicted in Figure~\ref{fig:Polarity}.

\fig{fig:Polarity}{Points $v_{i+1}, \ldots, v_{j-1}$ all lie inside either $\Pi^-$ or $\Pi^+$.}{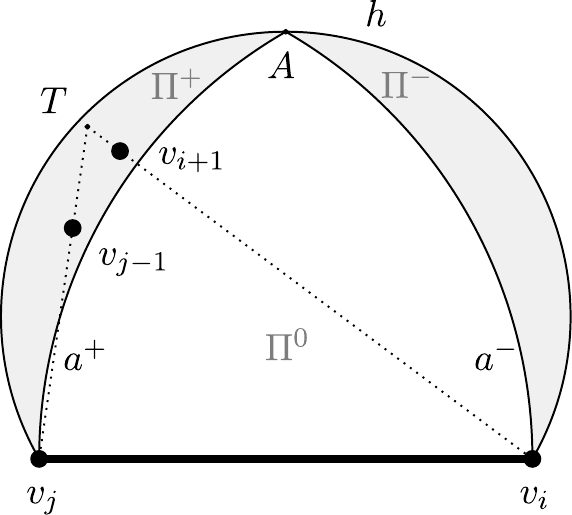}

\begin{lemma}
	\label{lem:Polarity}
	If $(i,j)$ is a candidate diagonal, then points $v_{i+1}, \ldots, v_{j-1}$ either all belong to $\Pi^-$ or all belong to $\Pi^+$.
\end{lemma}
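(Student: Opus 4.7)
I expect the proof to go by contradiction. Suppose $(i,j)$ is a candidate diagonal but some $v_k$ with $k \in \{i+1,\ldots,j-1\}$ lies in $\Pi^0$, which gives both $|v_iv_k| \leq |v_iv_j|$ and $|v_kv_j| \leq |v_iv_j|$. The plan is to exhibit a valid solution to $\Matching(i,j)$ that does not contain the pair $(i,j)$ and whose value is at most $\max\{S[i+1,j-1],|v_iv_j|\}$; this would contradict the necessity of $(i,j)$, which asks that case (a) of the recurrence be the unique minimum.

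First I would reduce to the edge-adjacent situation $k = i+1$ or $k = j-1$. The turning-angle bound $\tau(i,j) \leq 2\pi/3$ should guarantee that the polygon path $v_i \to v_{i+1} \to \cdots \to v_j$ lies inside the bulge bounded by the arc $h$ and the segment $v_iv_j$, so the intermediate vertices all lie in $\Pi^-\cup\Pi^0\cup\Pi^+$. Combined with convexity of $\mathcal{P}$, I expect that an interior vertex in the central region $\Pi^0$ forces at least one of $v_{i+1}$ or $v_{j-1}$ into $\Pi^0$ as well, since the polygon path can only enter $\Pi^0$ through a neighborhood of $v_i$ or of $v_j$ on the boundary. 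By symmetry, say $v_{i+1} \in \Pi^0$, so $|v_iv_{i+1}| \leq |v_iv_j|$ and $|v_{i+1}v_j| \leq |v_iv_j|$.

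The natural alternative is case (b) of the recurrence, of value $\max\{S[i+2,j],\,|v_iv_{i+1}|\}$. The edge term is already bounded by $|v_iv_j|$, so what remains is to show $S[i+2,j] \leq \max\{S[i+1,j-1],|v_iv_j|\}$. I would attempt this with an exchange argument on the optimal $M_{i+1,j-1}$: letting $v_m$ denote the partner of $v_{i+1}$ in $M_{i+1,j-1}$, drop the pair $(v_{i+1},v_m)$ and insert $(v_m,v_j)$, using $|v_{i+1}v_j| \leq |v_iv_j|$ and a local geometric argument to control $|v_mv_j|$ in terms of $|v_iv_j|$ and pairs already present in $M_{i+1,j-1}$. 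Together with $(i,i+1)$ this would give a case-(b) solution of value no larger than the case-(a) solution, the required contradiction.

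For the second half of the conclusion, that the intermediate vertices all lie in the \emph{same} of $\Pi^-$ or $\Pi^+$, I would use convexity: once no vertex lies in $\Pi^0$, the regions $\Pi^-$ and $\Pi^+$ are separated inside the bulge by the curve $a^-\cup a^+$, which meets only at the apex $A$; a convex polygonal path that stays inside the bulge and avoids $\Pi^0$ therefore cannot have vertices on both sides of this separator.

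The main obstacle is preserving the $\Matching$ structural constraints through the exchange. A naive swap can introduce an extra cascade or put a second diagonal on the region incident to the virtual edge $v_{i+2}v_j$, violating feasibility for $\Matching(i+2,j)$. The construction must therefore be adapted to which of cases (a), (b), (c) of its own recurrence $M_{i+1,j-1}$ uses, and one has to split on whether $k-i$ or $j-k$ is odd to decide between targeting case (b) or case (c) of $\Matching(i,j)$. The geometric ingredients $|v_iv_k|,|v_kv_j| \leq |v_iv_j|$ are the lever used throughout to make the values of the reconstructed matchings come out right.
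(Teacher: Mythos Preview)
Your plan has a genuine gap at the exchange step. You want to turn the optimal $M_{i+1,j-1}$ into a valid solution of $\Matching(i+2,j)$ by dropping $(v_{i+1},v_m)$ and inserting $(v_m,v_j)$, but nothing you have assumed bounds $|v_mv_j|$: the hypotheses $v_{i+1}\in\Pi^0$ give you $|v_{i+1}v_j|\le|v_iv_j|$ and $|v_{i+1}v_m|\le S[i+1,j-1]$, yet $v_m$ can sit far from $v_j$ (e.g.\ $m=i+2$ with $v_{i+2}\in\Pi^-$), so $|v_mv_j|$ need not be dominated by either quantity. You yourself flag the second obstacle---the swap can create a second cascade or a second diagonal on the boundary region of $(i+2,j)$---and this is real; recursing on the case split of $M_{i+1,j-1}$ does not obviously terminate or stay under control. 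The reduction to $v_{i+1}\in\Pi^0$ or $v_{j-1}\in\Pi^0$ is also asserted rather than proved; it can be salvaged via unimodality of distances from a fixed vertex in a convex polygon, but only after you first establish (using the turning-angle bound, via the triangle $\triangle v_iTv_j$) that one of $\Pi^-,\Pi^+$ is already empty.

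The paper avoids all of this by not touching $M_{i+1,j-1}$ at all. Once one of $\Pi^\pm$ is empty (say $\Pi^-$), it takes the first index $k$ with $v_k\in\Pi^+$ and writes down an \emph{explicit} matching on $\pij$ consisting of consecutive edges, plus at most one diagonal $(k-1,j)$ depending on parity. Every matched pair then lies either entirely in $\Pi^0$ or entirely in the lens bounded by $a^+$ and the segment $Av_j$; both regions have diameter $|v_iv_j|$, so the whole matching has value at most $|v_iv_j|$. This matching trivially has at most one cascade and at most one diagonal touching the $(i,j)$-region, so it witnesses that $(i,j)$ is not necessary. The point you are missing is that you do not need a matching as good as case~(\ref{eqn:sa}); you only need one of value $\le|v_iv_j|$, since any solution containing $(i,j)$ already has value $\ge|v_iv_j|$. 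Building such a matching from scratch is far easier than surgically editing an optimal one.
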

\begin{proof}
	Let $T$ be the point of intersection of lines $v_iv_{i+1}$ and $v_jv_{j-1}$, see Figure~\ref{fig:Polarity}. Since $\tau(i,j) \leq 2\pi/3$, the point $T$ lies in the area bounded by the line segment $v_iv_j$ and the arc $h$. Because of convexity, all points in $\pij$ must lie inside the triangle $\triangle{v_iTv_j}$, so there cannot be two points from $\p{i+1}{j-1}$ such that one is on the right of the directed line $v_iA$ and the other is on the left of the directed line $v_jA$. This as well means that either $\Pi^-$ or $\Pi^+$ is empty. Without loss of generality, let us assume that there are no points from $\p{i+1}{j-1}$ in $\Pi^-$.
	
	It remains to be proved that none of the points in $\p{i+1}{j-1}$ lies in $\Pi^0$. Suppose the opposite, that there is such a point in $\Pi_0$. Let $k$ be the first index in the sequence $\pij$ such that $v_k \in \Pi^+$, see Figure~\ref{fig:PolarityProof}. Since $(i,j)$ is a feasible pair, $\pij$ is of even size, implying that the parity of the number of points in $\Pi^+$ is the same as the parity of the number of points in $\Pi^0$. (If there are points on $a^+$, we assign them to either region.)

	If the number of points in $\Pi^+$, as well as in $\Pi^0$, is odd (not counting points $v_i$ and $v_j$), see Figure~\ref{fig:PolarityProofOdd}, we make a matching using pairs $(i,i+1), (i+2,i+3), \ldots, (j-1,j)$. In the case the number is even, see Figure~\ref{fig:PolarityProofEven}, we make a matching using pairs $(i,i+1), (i+2,i+3), \ldots, (k-3,k-2)$, pair $(k-1,j)$, and pairs $(k,k+1), (k+2,k+3), \ldots, (j-2,j-1)$.
	
	\begin{figure}[ht]
		\centering
		\subfigure[]{
			\label{fig:PolarityProofOdd}
			\includegraphics{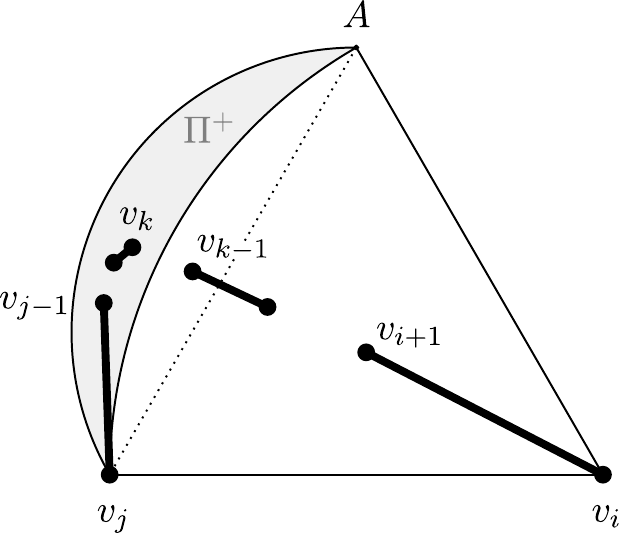}
		}
		\hspace{20pt}
		\subfigure[]{
			\label{fig:PolarityProofEven}
			\includegraphics{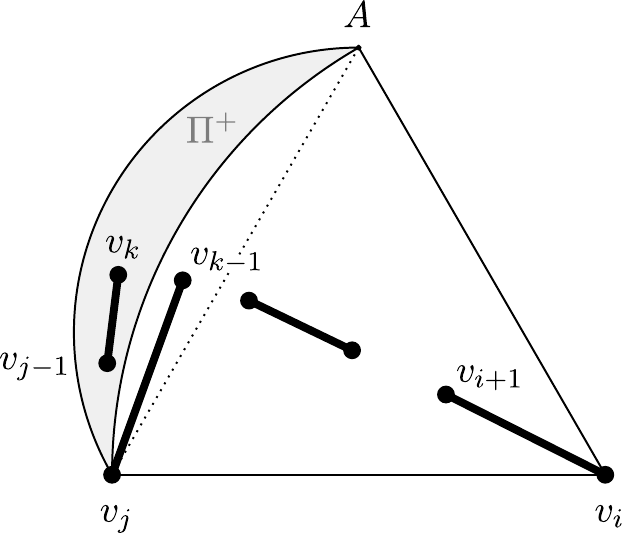}
		}
		\caption{Matching points in $\pij$, depending on the parity of their number.}
		\label{fig:PolarityProof}
	\end{figure}
		
	For each matched pair in any of these two cases, points of the pair either both belong to $\Pi^0$, or they both belong to the area bounded by $a^+$ and the line segment $Av_j$. Both of these areas have diameter $|v_iv_j|$, so all matched pairs have distance not larger than $|v_iv_j|$. So, in each of the cases we constructed a matching $M_{i,j}$ of all points in $\pij$ which does not contain $(i,j)$, with $bn(M_{i,j}) \leq |v_iv_j|$. The matching also satisfies the condition for subproblem $\Matching(i,j)$, i.e.\ it has at most one cascade, and the pair $(i,j)$ belongs to a region bounded by at most one diagonal from $M_{i,j}$ different from $(i,j)$ (which can only be the diagonal $(v_{k-1}v_j)$ in the second case). Consequently, $(i,j)$ cannot be a necessary diagonal, and, thereby, it can not be a candidate diagonal, leading to a contradiction with the assumption that there is a point from $\p{i+1}{j-1}$ in $\Pi^0$.
\end{proof}

With $\Pi^-(i,j)$ and $\Pi^+(i,j)$ we respectively denote areas $\Pi^-$ and $\Pi^+$ corresponding to a candidate diagonal $(i,j)$.

Two possibilities for a candidate diagonal $(i,j)$ provided by Lemma~\ref{lem:Polarity} bring forth a concept of \emph{polarity}. If points $\p{i+1}{j-1}$ lie in $\Pi^-(i,j)$ we say that candidate diagonal $(i,j)$ has \emph{negative polarity} and has $i$ as its \emph{pole}. Otherwise, if these points lie in $\Pi^+(i,j)$, we say that $(i,j)$ has \emph{positive polarity} and pole in $j$.

\begin{lemma}
	\label{lem:CandidatesDontTouch}
	No two candidate diagonals of the same polarity can have the same point as a pole.
\end{lemma}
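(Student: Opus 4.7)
My plan is to argue by contradiction. Suppose $(i,j_1)$ and $(i,j_2)$ are two distinct candidate diagonals with the same pole; by a mirror-image argument it suffices to treat the negative-polarity case, so the common pole is $v_i$. Without loss of generality $j_1$ lies strictly between $i+1$ and $j_2-1$ in the cyclic order, so that $v_{j_1}$ is itself one of the intermediate vertices of $(i,j_2)$.

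The crucial first ingredient I would establish is a description of $\Pi^-(i,j)$ as seen from its pole $v_i$: the set of directions of rays from $v_i$ meeting $\Pi^-(i,j) \setminus \{v_i\}$ is an open angular sector of width exactly $\pi/3$ at $v_i$, bounded on one side by the ray through the apex $A(i,j)$ of the equilateral triangle on $v_iv_j$ (at angle $\pi/3$ clockwise from $\overrightarrow{v_iv_j}$), and on the other side by the tangent line to the arc $h$ at $v_i$ (which, by the tangent-chord relation applied to the $\pi/3$-inscribed-angle arc $h$, makes angle $2\pi/3$ clockwise with $\overrightarrow{v_iv_j}$). Consequently, every vertex $P \in \Pi^-(i,j) \setminus \{v_i\}$ has $\overrightarrow{v_iP}$ lying strictly inside this open $\pi/3$-sector.

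With that in place, two incompatible angular inequalities follow. Because $(i,j_1)$ is a diagonal, $j_1 \ge i+2$, so $v_{i+1}$ exists and lies in $\Pi^-(i,j_1) \cap \Pi^-(i,j_2)$; the direction $\overrightarrow{v_iv_{i+1}}$ therefore belongs to the intersection of two open $\pi/3$-wide sectors whose centers are the directions $\pi/2$ clockwise from $\overrightarrow{v_iv_{j_1}}$ and from $\overrightarrow{v_iv_{j_2}}$ respectively, which forces the angle between $\overrightarrow{v_iv_{j_1}}$ and $\overrightarrow{v_iv_{j_2}}$ to be strictly less than $\pi/3$. On the other hand, $v_{j_1} \in \Pi^-(i,j_2)$ places $\overrightarrow{v_iv_{j_1}}$ inside the open sector of $(i,j_2)$, which lies between $\pi/3$ and $2\pi/3$ clockwise from $\overrightarrow{v_iv_{j_2}}$, so the same angle is strictly greater than $\pi/3$. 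The two bounds contradict each other; the positive-polarity case follows by the symmetric argument swapping the roles of the two endpoints.

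The main obstacle I anticipate is the geometric lemma of the second paragraph --- pinning down the exact width and orientation of the sector and being careful with the open/closed boundary of $\Pi^-$ so that both derived inequalities are strict, even in the degenerate case where a polygon vertex might lie on the arc $a^-$. Once that lemma is in hand, the rest of the proof reduces to a short interval-comparison argument.
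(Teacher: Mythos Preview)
Your proposal is correct and follows essentially the same route as the paper. Both arguments rest on the same key geometric observation---that $\Pi^\pm(i,j)$, seen from its pole, lies in the angular sector between $\pi/3$ and $2\pi/3$ measured from the direction of the diagonal---and then derive a contradiction by angle arithmetic; you bound the single angle $\angle v_{j_1}v_iv_{j_2}$ above and below by $\pi/3$, while the paper sums two such lower bounds to exceed the $2\pi/3$ upper bound, which is the same computation rearranged.
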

\begin{proof}
	Let us suppose the contrary, that is, that there are two candidate diagonals of the same polarity with the same point as a pole. Assume, without loss of generality, that $(i,k)$ and $(j,k)$ are two such candidate diagonals, both with positive polarity, each having its pole in $k$. Without loss of generality, we also assume that $j \in \p{i+1}{k-1}$, see Figure~\ref{fig:CandidatesDontTouch}.	
	
	\fig{fig:CandidatesDontTouch}{Two candidate diagonals of equal polarity having the same pole.}{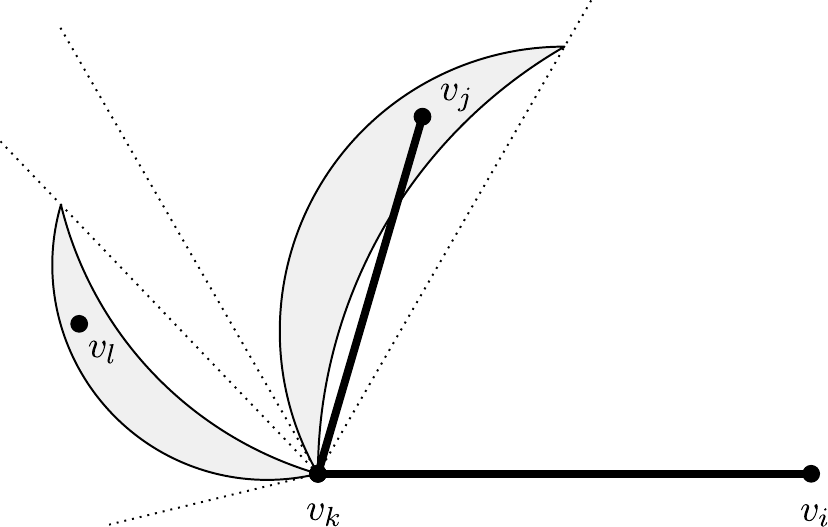}
	
	Area $\Pi^+(i,k)$ lies inside the angle with vertex $v_k$ and sides at angles of $\pi/3$ and $2\pi/3$ with line $v_kv_i$. Similarly, $\Pi^+(j,k)$ lies inside the angle with vertex $v_k$ and sides at angles of $\pi/3$ and $2\pi/3$ with line $v_kv_j$.
		
	Since $(j,k)$ is a diagonal, there is $l \in \p{j+1}{k-1}$.	Points $v_j$ and $v_l$ lie in $\Pi^+(i,k)$ and $\Pi^+(j,k)$, respectively, meaning that $\pi/3 \leq \angle v_iv_kv_j, \angle v_jv_kv_l \leq 2\pi/3$, implying $2\pi/3 \leq \angle v_iv_kv_j + \angle v_jv_kv_l = \angle v_iv_kv_l \leq 4\pi/3$. This means that $v_l$ does not belong to $\Pi^+(i,j)$. However, that is not possible, since $l \in \p{i+1}{j-1}$ as well, so we have a contradiction.
\end{proof}

As a simple corollary of Lemma~\ref{lem:CandidatesDontTouch}, we get that there is at most linear number of candidate diagonals.

\begin{lemma}
	\label{lem:FewCandidates}
	There are $O(n)$ candidate diagonals.
\end{lemma}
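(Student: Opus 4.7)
The plan is to deduce this directly from Lemma~\ref{lem:CandidatesDontTouch} via a simple pigeonhole count. First I would recall that, by the definition of polarity introduced after Lemma~\ref{lem:Polarity}, every candidate diagonal $(i,j)$ has a well-defined pole, which is one of its two endpoints (namely $i$ if the polarity is negative, and $j$ if the polarity is positive). Thus the map sending a candidate diagonal to its pole is a well-defined function from the set of candidate diagonals into $P$.

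Next I would count the preimages of this map. Fix a point $v_k \in P$. By Lemma~\ref{lem:CandidatesDontTouch}, at most one candidate diagonal of positive polarity has $v_k$ as its pole, and at most one candidate diagonal of negative polarity has $v_k$ as its pole. Consequently, at most two candidate diagonals in total have $v_k$ as their pole.

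Summing over all $n$ points of $P$, the total number of candidate diagonals is at most $2n$, which is $O(n)$, completing the proof. There is no real obstacle here; the entire content is already carried by Lemma~\ref{lem:CandidatesDontTouch}, and this lemma is just the packaging of that result as a cardinality bound, which will later justify that the quadratic-time search over candidate-diagonal-anchored triples is affordable.
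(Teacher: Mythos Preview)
Your proof is correct and essentially identical to the paper's own argument: both use Lemma~\ref{lem:CandidatesDontTouch} to conclude that each point of $P$ is the pole of at most one candidate diagonal of each polarity, giving at most $2n$ candidate diagonals in total. The only difference is cosmetic phrasing (you frame it as counting preimages of the pole map, the paper counts diagonals per polarity class directly).
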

\begin{proof}
	Among all candidate diagonals of the same polarity no two can have a pole in the same point of $P$. Therefore, there are at most $n$ candidate diagonals of the same polarity, and, consequently, at most $2n$ candidate diagonals in total.
\end{proof}

Finally, we combine our findings from Lemma~\ref{lem:BottleneckWithCandidate} and Lemma~\ref{lem:FewCandidates}, as described in the beginning of Section~\ref{sec:FindingBottleneckMatching}, to construct Algorithm~\ref{alg:BottleneckMatching}.

\begin{algorithm}[h]
	\caption{Bottleneck Matching}
	\label{alg:BottleneckMatching}
	\begin{algorithmic}
		\State Calculate $S[i,j]$ and $necessary(i,j)$ for all feasible $(i,j)$ pairs, as described in Section~\ref{sec:Subproblems}.
		
		\State $best \leftarrow \min\{S[i+1,i] : i \in \p{0}{n-1}\}$
		
		\For {all feasible $(i,j)$}		
			\If{$necessary(i,j)$ and $\tau(i,j) \leq 2\pi/3$}
				\For {$k \in \p{j+1}{i-1}$ such that $(j+1,k)$ is feasible}
					\State $best \leftarrow \min\{best, \max\{S[i,j], S[j+1,k], S[k+1,i-1]\}\}$
				\EndFor
			\EndIf
		\EndFor
	\end{algorithmic}
\end{algorithm}

\begin{theorem}
	Algorithm~\ref{alg:BottleneckMatching} finds the value of bottleneck matching in $O(n^2)$ time.
\end{theorem}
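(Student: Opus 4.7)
The plan is to verify correctness first and then count operations. By Lemma~\ref{lem:ThreeCascades} every bottleneck matching has either at most one cascade or exactly three cascades (two cascades being impossible), so I would split the correctness argument along this dichotomy and show that in both regimes the final value of \emph{best} equals the true bottleneck value.

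First I would handle the ``at most one cascade'' case. Here I would argue that $\min_i S[i+1,i]$ equals the value of the best matching with at most one cascade. Taking $j=i$ (so $\p{i+1}{i}$ wraps around to all $n$ points), any matching $M$ of $P$ with at most one cascade can be interpreted as a solution to $\Matching(i+1,i)$ for any $i$ such that the edge $(i,i+1)$ is not inside the cascade; conversely every $M_{i+1,i}$ is a matching of all of $P$ with at most one cascade. Hence the initial assignment places \emph{best} at exactly this value, and all subsequent updates correspond to concrete $3$-cascade matchings whose value is at least the bottleneck, so \emph{best} cannot decrease below the optimum.

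Next I would handle the ``exactly three cascades'' case. Lemma~\ref{lem:BottleneckWithCandidate} supplies a $3$-cascade bottleneck $M$ with at least one candidate inner diagonal $(i^*,j^*)$; label the remaining two inner diagonals as $(j^*+1,k^*)$ and $(k^*+1,i^*-1)$, following the cyclic decomposition around the central $3$-bounded region. The restriction of $M$ to each of the three arcs $\p{i^*}{j^*}$, $\p{j^*+1}{k^*}$, $\p{k^*+1}{i^*-1}$ contains exactly the cascade emanating from the corresponding inner diagonal (hence at most one cascade), and the inner diagonal lies on a region bounded by at most one further diagonal, so it is a feasible solution of the respective $\Matching$ subproblem. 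Therefore
\[
\max\{S[i^*,j^*],\,S[j^*+1,k^*],\,S[k^*+1,i^*-1]\}\;\leq\;bn(M).
\]
Since $necessary(i^*,j^*)=\top$ and $\tau(i^*,j^*)\leq 2\pi/3$, the pair $(i^*,j^*)$ passes the filter of the outer loop, and the inner loop encounters $k=k^*$, so \emph{best} is driven down to at most $bn(M)$. As every update again corresponds to an actual matching, $\emph{best}\geq bn(M)$ as well, giving equality.

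Finally, for the time complexity I would invoke the precomputation cost of $S$ and $necessary$, which is $O(n^2)$ by Section~\ref{sec:Subproblems}, and then bound the double loop. The outer loop touches $O(n^2)$ pairs with $O(1)$ work per pair to test the candidate condition, while the inner loop over $k$ costs $O(n)$ but is executed only when $(i,j)$ is a candidate; by Lemma~\ref{lem:FewCandidates} there are only $O(n)$ candidates, so the total inner-loop work is $O(n^2)$. Summing everything gives the claimed $O(n^2)$ bound. The most delicate step I anticipate is the restriction argument in Case~B: one must be careful that the combinatorial ``at most one cascade plus a boundary diagonal in a region bounded by at most one other diagonal'' constraint of $\Matching(i,j)$ is indeed forced on each of the three arcs by the global $3$-cascade structure of $M$, which is exactly what is needed to compare the restricted bottleneck values with $S[\cdot,\cdot]$.
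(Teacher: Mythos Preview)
Your proposal is correct and follows essentially the same approach as the paper: split into the at-most-one-cascade and three-cascade cases, handle the former via $\min_i S[i+1,i]$, and handle the latter by invoking Lemma~\ref{lem:BottleneckWithCandidate} for correctness and Lemma~\ref{lem:FewCandidates} for the $O(n^2)$ bound on the double loop. Your write-up is in fact more explicit than the paper's about why the restriction of a $3$-cascade bottleneck to each arc satisfies the constraints of the corresponding $\Matching$ subproblem, which the paper leaves implicit; one small slip is that Lemma~\ref{lem:ThreeCascades} asserts only the \emph{existence} of a bottleneck matching with at most three cascades, not that \emph{every} bottleneck matching has this form, but your actual argument only uses existence, so this does not affect correctness.
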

\begin{proof}
	The first step, calculating $S[i,j]$ and $necessary(i,j)$ for all $(i,j)$ pairs, is done in $O(n^2)$ time, as described in Section~\ref{sec:Subproblems}. The second step finds the minimal value of all matchings with at most one cascade in $O(n)$ time.

	The rest of the algorithm finds the minimal value of all $3$-cascade matchings. Lemma~\ref{lem:BottleneckWithCandidate} tells us that there is a bottleneck matching among $3$-cascade matchings with one inner diagonal being a candidate diagonal, so the algorithm searches through all such matchings. We first fix the candidate diagonal $(i,j)$ and then enter the inner for-loop, where we search for an optimal $3$-cascade matching having $(i,j)$ as an inner diagonal. Although the outer for-loop is executed $O(n^2)$ times, Lemma~\ref{lem:FewCandidates} guarantees that the if-block is entered only $O(n)$ times. The inner for-loop splits $\p{j+1}{i-1}$ in two parts, $\p{j+1}{k}$ and $\p{k+1}{i-1}$, which together with $\pij$ make three parts, each to be matched with at most one cascade. We already know the values of optimal solutions for these three subproblems, so we combine them and check if we get a better overall value. At the end, the minimum value of all examined matchings is contained in $best$, and that has to be the value of a bottleneck matching, since we surely examined at least one bottleneck matching.
\end{proof}

Algorithm~\ref{alg:BottleneckMatching} gives only the value of a bottleneck matching, however, it is easy to reconstruct an actual bottleneck matching by reconstructing matchings for subproblems that led to the minimum value. This reconstruction can be done in linear time.

\bibliographystyle{plain}
\bibliography{bm}

\begin{thebibliography}{10}

\bibitem{abu2015approximating}
A~Karim Abu-Affash, Ahmad Biniaz, Paz Carmi, Anil Maheshwari, and Michiel Smid.
\newblock Approximating the bottleneck plane perfect matching of a point set.
\newblock {\em Computational Geometry}, 48(9):718 -- 731, 2015.

\bibitem{abu2014bottleneck}
A~Karim Abu-Affash, Paz Carmi, Matthew~J Katz, and Yohai Trabelsi.
\newblock Bottleneck non-crossing matching in the plane.
\newblock {\em Computational Geometry}, 47(3):447--457, 2014.

\bibitem{aichholzer2009compatible}
Oswin Aichholzer, Sergey Bereg, Adrian Dumitrescu, Alfredo Garc{\'\i}a, Clemens
  Huemer, Ferran Hurtado, Mikio Kano, Alberto M{\'a}rquez, David Rappaport,
  Shakhar Smorodinsky, Diane Souvaine, Jorge Urrutia, and David~R Wood.
\newblock Compatible geometric matchings.
\newblock {\em Computational Geometry}, 42(6):617--626, 2009.

\bibitem{aichholzer2010edge}
Oswin Aichholzer, Sergio Cabello, Ruy Fabila-Monroy, David Flores-Penaloza,
  Thomas Hackl, Clemens Huemer, Ferran Hurtado, and David~R Wood.
\newblock Edge-removal and non-crossing configurations in geometric graphs.
\newblock {\em Discrete Mathematics and Theoretical Computer Science},
  12(1):75--86, 2010.

\bibitem{alon1993long}
Noga Alon, Sridhar Rajagopalan, and Subhash Suri.
\newblock Long non-crossing configurations in the plane.
\newblock In {\em Proceedings of the ninth annual symposium on Computational
  geometry}, pages 257--263. ACM, 1993.

\bibitem{aloupis2015matching}
Greg Aloupis, Esther~M Arkin, David Bremner, Erik~D Demaine, S{\'a}ndor~P
  Fekete, Bahram Kouhestani, and Joseph~SB Mitchell.
\newblock Matching regions in the plane using non-crossing segments.
\newblock EGC, 2015.

\bibitem{aloupis2013non}
Greg Aloupis, Jean Cardinal, S{\'e}bastien Collette, Erik~D Demaine, Martin~L
  Demaine, Muriel Dulieu, Ruy Fabila-Monroy, Vi~Hart, Ferran Hurtado, Stefan
  Langerman, Maria Saumell, Carlos Seara, and Perouz Taslakian.
\newblock Non-crossing matchings of points with geometric objects.
\newblock {\em Computational geometry}, 46(1):78--92, 2013.

\bibitem{biniaz2014bottleneck}
Ahmad Biniaz, Anil Maheshwari, and Michiel Smid.
\newblock Bottleneck bichromatic plane matching of points.
\newblock Canadian Conference on Computational Geometry, 2014.

\bibitem{carlsson2015bottleneck}
John~Gunnar Carlsson, Benjamin Armbruster, Haritha Bellam, and Rahul Saladi.
\newblock A bottleneck matching problem with edge-crossing constraints.
\newblock {\em International Journal of Computational Geometry and
  Applications}, to appear.

\bibitem{chang1992solving}
Maw-Shang Chang, Chuan~Yi Tang, and Richard C.~T. Lee.
\newblock Solving the euclidean bottleneck matching problem by k-relative
  neighborhood graphs.
\newblock {\em Algorithmica}, 8(1-6):177--194, 1992.

\bibitem{efrat01geometryhelps}
Alon Efrat, Alon Itai, and Matthew~J Katz.
\newblock Geometry helps in bottleneck matching and related problems.
\newblock {\em Algorithmica}, 31(1):1--28, 2001.

\bibitem{efrat2000computing}
Alon Efrat and Matthew~J Katz.
\newblock Computing euclidean bottleneck matchings in higher dimensions.
\newblock {\em Information processing letters}, 75(4):169--174, 2000.

\bibitem{kratochvil2013non}
Jan Kratochv{\'\i}l and Torsten Ueckerdt.
\newblock Non-crossing connectors in the plane.
\newblock In {\em Theory and Applications of Models of Computation}, volume
  7876 of {\em Lecture Notes in Computer Science}, pages 108--120. Springer,
  2013.

\end{thebibliography}

\end{document}